\newtheorem{theorem}{Theorem}[section]
\renewcommand{\hat}{}
\newcommand{\E}{\mathbb{E}}
\renewcommand{\P}{\mathbb{P}}
\newcommand{\qub}[2]{
\begin{pmatrix}
#1 \\
#2
\end{pmatrix}
}
\begin{document}

\title{Quantum control using quantum memory}

\author{Mathieu Roget}
\email{mathieu.roget@ens-lyon.org}
\affiliation{Aix-Marseille Universit{\'e}, Universit{\'e} de Toulon, CNRS, LIS, Marseille, France and ENS de Lyon, D{\'e}partement d'Informatique}

\author{Basile Herzog}
\email{basile.herzog@gmail.com}
\affiliation{Aix-Marseille Universit{\'e}, Universit{\'e} de Toulon, CNRS, LIS, Marseille, France }
\affiliation{Université de Lorraine, LPCT, Nancy, France}

\author{Giuseppe Di Molfetta}
\email{giuseppe.dimolfetta@lis-lab.fr}
\affiliation{Aix-Marseille Universit{\'e}, Universit{\'e} de Toulon, CNRS, LIS, Marseille, France and Quantum Computing Center, Keio University}

\date{\today}
\begin{abstract}
We propose a new quantum numerical scheme to control the dynamics of a quantum walker in a two dimensional space-time grid. More specifically, we show how, introducing a quantum memory for each of the spatial grid, this result can be achieved simply by acting on the initial state of the whole system, and therefore can be exactly controlled once for all. As example we prove analytically how to encode in the initial state any arbitrary walker's mean trajectory and variance. This  brings  significantly closer the possibility of implementing dynamically interesting physics models on  medium term quantum devices, and introduces a new direction in simulating aspects of quantum field theories (QFTs), notably on curved manifold.
\end{abstract}

\maketitle

Quantum control refers to the ability to steer a dynamical quantum system from an initial to a desired target or outcome, with a desired accuracy. Several theoretical and experimental approaches to model controlled wave packets and their application are very useful to pave the way for future simulation or quantum calculation schemes \cite{dong2010quantum, georgescu2014quantum}. In many of these, the physical system to be controlled is driven by an external potential, which needs to be controlled all along the experience, until the target is achieved. Although in this work we do not claim to offer a general theory of quantum control, we provide a new approach in which the control scheme is encoded once and for all into its initial state. The main protagonist here is not a generic quantum system, but a quantum walks (QW) in discrete time \cite{Aharonov1993aa,grossing88,shakeel2014history}. What may seem like a particular choice, in reality offers great potential, given the recognised versatility of this simple system. In fact, QW are a universal computational model \cite{childs2009universal, lovett2010universal}, that spans a large spectrum of physical and biological phenomena, relevant both for fundamental science and for applications. Applications include search algorithms \cite{Grover1996,Ambainis2004,Portugal2013, guillet2019grover} and graph isomorphism algorithms \cite{berry2011two} to modeling and simulating quantum \cite{arrighi2018quantum, arrighi2019curved, hatifi2019quantum, arrighi2020quantum, di2016quantum} and classical dynamics \cite{di2016discrete, brun2003quantum}.  These models have sparked various theoretical investigations covering areas in mathematics, computer science, quantum information and statistical mechanics and have been defined in any physical dimensions \cite{mackay2002quantum, marquez2017fermion} and over several topologies \cite{acevedo2005quantum, rohde2011multi, aristote2020dynamical}. 
QW appear in multiple variants and can be defined on arbitrary graphs. Essentially, these simple systems have two registers: one for its position on the graph and the other is its internal state, often called coin state. It propagates on the graph, conditioned by its internal state, similarly to the classical case, where at each step we flip a coin to determine the direction of the walker. The essential difference is that in the quantum case, the walker propagates in superposition on the graph in various directions starting from a node. This feature allows the quantum walker to explore the graph quadratically faster a classical one, property that make it very useful to design, e.g., efficient search algorithms. However, we don't know many way to control the quantum walker evolution. For instance we can choose the initial condition and the evolution operator to tune the walker's variance $\sigma(t) = a f(t)$, where $a$ is a real prefactor and $f(t)$ is typically a linear function of $t$. However, once these are fixed at the initial time, both $f$ and $a$ remain the same all along the evolution, unless we don't allow the evolution operator to change in an in-homogeneous way at each time-step, as in \cite{Dimolfetta2014aa, Dimolfetta2013aa}, which may be very costly. How can we control the walker's dynamics at our will without having to change the evolution operator? Would it be possible to control, having only the initial condition, the variance or its average trajectory? In this manuscript we argue that, at the price of introducing a quantum memory, the answer is affirmative. Quantum walks with memory have already been studied and come in several variants \cite{ konno2010limit, li2016generic}. As an example, these modified quantum walks may have extra coins to record the walker’s latest path, as in \cite{mcgettrick2010one,rohde2013quantum}. Here, the idea is to define an additional qubit for each site in the grid, with which the walker interacts throughout the evolution. Surprisingly, we will prove that the initial condition of the whole system, memory + walker, is sufficient to control, e.g., the variance and the mean position of the walker for all times. The interest is double : from one hand we provide a simple distributed quantum computational model to control a single qubit along its dynamics, which will not require us to control and adjust the local update rule at each time step; from a totally different perspective, this simple system may suggest an operational way to model and to unitary discretise curved propagation, as argued in \cite{arrighi2017quantum}. 

The manuscript is organised as follows : In Sec. \ref{model} we will provide the definition of the model with and without memory, in one spatial dimension; then, in Sec. \ref{sec:control}, we will prove analytically and numerically how to control the variance and the mean trajectory of a quantum walker, solely via the initial condition of the whole system. Finally, in Sec. \ref{sec:discussion} we discuss and conclude. 

\section{The model}\label{model}

Formally, the simplest but non trivial QW is defined on a Hilbert space which has position  and  velocity (internal "spin" state) components. The \textit{position} Hilbert space $X$ is the set of states $\ket{x}$ where $x \in \mathbb{Z}_N$, and the \textit{velocity} Hilbert space is $V=\mathbb{C}^2$, for which we may choose some orthonormal basis labeled $\{\ket{v^-}, \ket{v^+}\}$. Denote the QW Hilbert space by $\mathcal{H}$,
\begin{equation} 
\mathcal{H}=X\otimes V.
\end{equation}
 The overall state of the walker at time $t \in \mathbb{N}$ may thus be written 
\begin{equation}
\Psi(t)=\sum_x \psi_x^+(t) \ket{x} \otimes \ket{v^+} + \psi_x^-(t) \ket{x} \otimes \ket{v^-},
\end{equation}
 where the scalar field $\psi_x^+$ (resp. $\psi_x^-$) gives, at every position $x\in \mathbb{Z}_N$, the amplitude of the particle being there and about to move right (resp. left). 
 We can write an amplitude vector at time $t$ and  position $x$ over the ordered basis of the coin space $\{\ket{v^+}, \ket{v^-}\}$,
\begin{equation}
\Psi_x(t) =\begin{pmatrix}\psi_{x}^+(t)\\\psi_{x}^-(t)\end{pmatrix}.
\label{eq:ampvec}
\end{equation} 
 Let $W$ be the evolution of the QW at each time step,
\begin{equation}
\Psi_x(t+1)=W \Psi_x(t).
\label{eq:QWeq}
\end{equation}
$W$ is composed  of a coin operator $\hat{C}$, an arbitrary element of $U(2)$, acting on the velocity space, 
 e.g., 
\[ C =\left(\begin{matrix}
\cos{\theta} & i\sin{\theta} \\
i\sin{\theta} & \cos{\theta} \\
\end{matrix}\right), \]\\
followed by a shift operator $\hat{S}$ 
\begin{equation}
\hat{S}\Psi_x(t) =\begin{pmatrix}\psi_{x-1}^+(t)\\\psi_{x+1}^-(t)\end{pmatrix},
\label{eq:shift}
\end{equation} 
with the overall evolution being
\begin{equation}
W = \hat{S}({\text{Id}_X \otimes\hat{C}}),
\end{equation}
where $\text{Id}_X$ is the identity operator on the position Hilbert space.
\paragraph{Quantum memory}
Now, let us consider that at each site $x$ of the grid we have a supplementary qubit $\ket{m_x} \in  \mathcal{M}^x=\mathbb{C}^2$. This extended Hilbert space, as proved in \cite{shakeel2014history, shakeel2019neighborhood}, may be used as a quantum memory to keep track of the past of the walker. Notice that the size of the Hilbert space now seems to be growing exponentially. However, according to \cite{shakeel2018quantum} this is not going to be an issue for infinite lattices as the Hilbert space can be taken to be countably infinite dimensional, even with the memory qubits included. That is because the interactions are only finite neighborhood and a Hilbert space of finite, unbounded configurations suffices \cite{arrighi2011unitarity}. 

The whole state (QW $+$ quantum memory) lies now in $X \otimes  V \bigotimes_{x=1}^{N} \mathcal{M}^x$. Indeed, one of the main motivation of this history dependent QW is to build a truly self-avoiding walker: One knows that the walker moves towards the left or towards the right according to the internal coin state; thus, in order to avoid sites already visited by the walker, one conditions the coin state on the neighbor memory states, which eventually recorded previous presence of the walker. 

The coin operator $\mathcal{C}$ of the previous section is replaced by a different  operator $Q$, that acts on the joint velocity-memory space : $V \otimes \mathcal{H}_{\mathcal{M}^{x-1}} \otimes \mathcal{H}_{\mathcal{M}^{x+1}}$ - whose basis is  the set of $\ket{v^\alpha \beta_l\delta_r}$, $\alpha = \{+, -\}$, $\ket{\beta_l}$ and $\ket{\delta_r}$ being the memory qubits located respectively at the positions $(x-1)$ and $(x+1)$ - so that the memory qubits adjacent to the position $x$ of the walker are involved:

\[\begin{array}{cccl}
\hat Q = \ket{v^+ 1 1 }\bra{v^+ 00} + 
  \ket{v^- 0 0}\bra{v^- 00} + \\
\ket{v^- 0 1}\bra{v^+ 01} + 
\ket{v^+ 0 1}  \bra{v^- 10} + \\
\ket{v^+ 1 0}\bra{v^+10} + 
\ket{v^- 1 0}\bra{v^-01} + \\
\ket{v^+ 0 0 } \bra{v^+11} + 
\ket{v^- 1 1}\bra{v^-11} \\
\end{array}
\] 

\begin{figure}
\includegraphics[width=7cm]{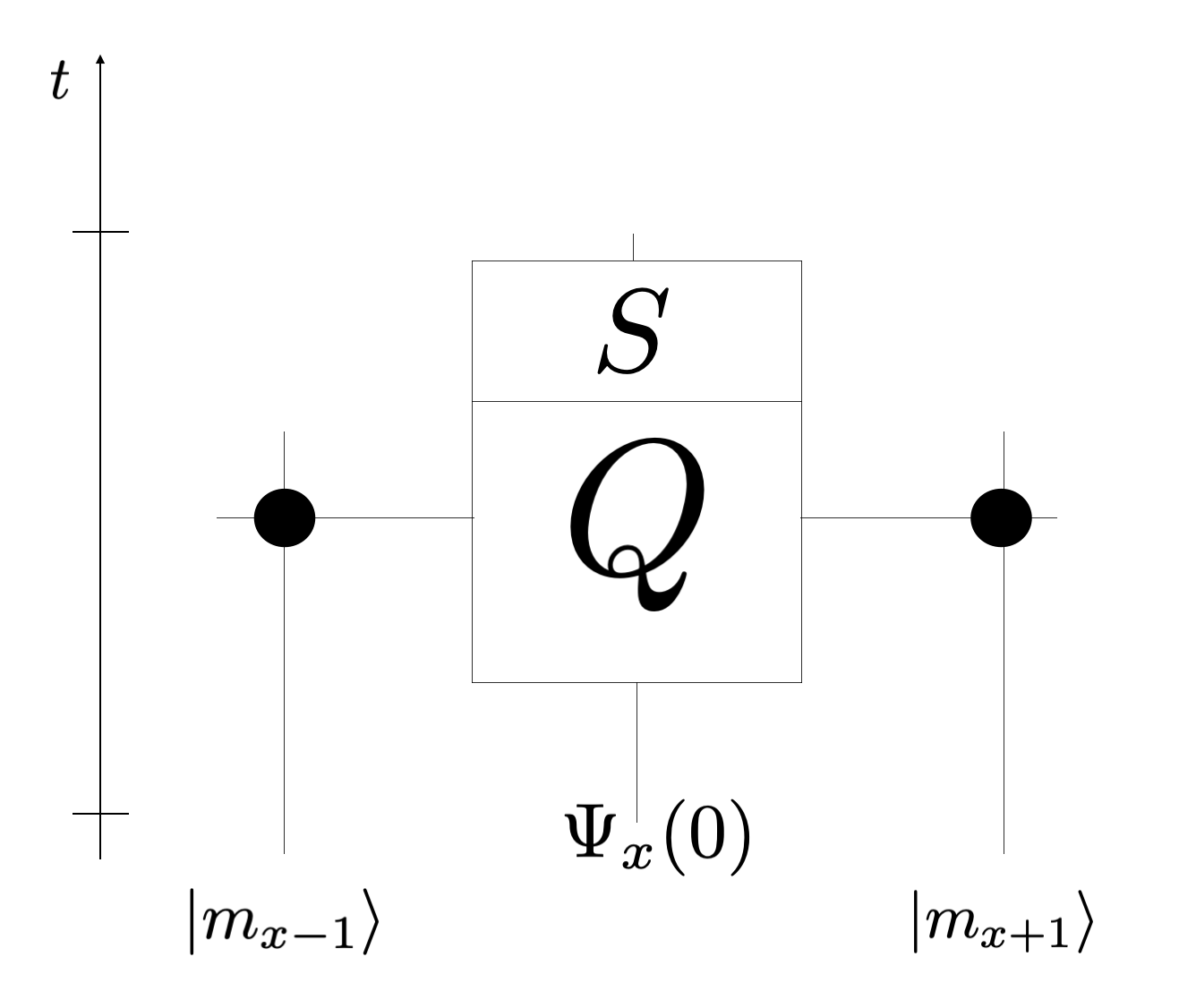}
\caption{Schematic evolution of the walker $\Psi_x(0)$ at the initial time conditioned by the neighbors qubits at position $x=-1$ and $x=1$.} 
\label{fig:evo}
\end{figure}

Finally, the shift operator still acts on the joint velocity-position space in the standard manner, as defined in Eq. \eqref{eq:shift}, and trivially on the memory space. Altogether the global evolution, depicted in Fig. \ref{fig:evo}, is:
\begin{equation}
\Psi(t+1)= G \Psi(t)
\label{eq:evo}
\end{equation}
\begin{equation}
G= S Q .
\end{equation}

\section{Control the walker's dynamics}\label{sec:control}

To simplify our analysis, we  choose a localized set of initial conditions  
\begin{small}
\begin{equation}
\Psi(0) = \ket{0}\ket{v^-} \left(\bigotimes_{x=-\left \lfloor{\frac{N}{2}}\right \rfloor}^{-1}{A_{-x}\ket{0}+B_{-x}\ket{1}}\right)\left(\bigotimes_{x=0}^{\left \lfloor{\frac{N}{2}}\right \rfloor}{\ket{0}}\right)
\label{eq:IC}
\end{equation}
\end{small}
which is equivalent to requiring that all those sites with a nonzero internal state have all quantum memories to their right set to $\ket{0}$, as depicted in Fig. \ref{fig:mem}. 

\begin{figure}
\includegraphics[width=8cm]{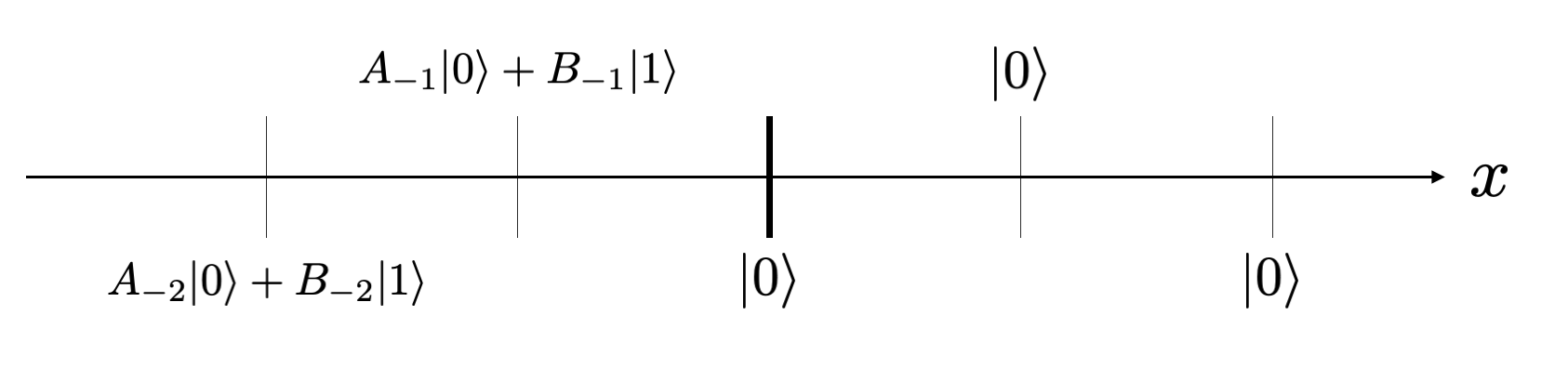}
\caption{The initial state of the quantum memory on the one dimensional line.} 
\label{fig:mem}
\end{figure}

By looking at the single walker subspace, the recurrence relations for each amplitude, detailed in Appendix \ref{ap:ap1}, read : 
\begin{equation}
\begin{split}
\psi_x^-(t+1) &= A_{-x}\psi_{x+1}^-(t) \\ 
\psi_x^+(t+1) &= \psi_{x-1}^+(t)  + B_{-x+2}\psi_{x-1}^-(t)
\label{eq:rec1}
\end{split}
\end{equation}
and given Eq. \eqref{eq:IC} we prove the following theorem :

\begin{theorem}
\label{th:amplitudes}
The left-moving and right-moving amplitudes $\psi_x^{\mp}(t)$, solutions of the linear set of Eqs. \eqref{eq:rec1}, reads:
\begin{equation}
\psi_x^{-}(t)=\left\{\begin{matrix}
\prod_{i=1}^t{A_i} & \text{if } x=-t \\
0 & \text{else} & \\
\end{matrix}\right.
\end{equation}
and 
\[\psi_x^{+}(t)=\left\{\begin{matrix}
0 & \text{if } |x|>t  \\
0 & \text{if } x=-t  \\
0 & \text{if } x-t \text{ is odd} \\
B_{\frac{t-x}{2}+1}\prod_{j=1}^{\frac{t-x}{2}}{A_j} & \text{else} & \\
\end{matrix}\right.\]
\end{theorem}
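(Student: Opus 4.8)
The plan is to prove both formulas simultaneously by induction on the time step $t$, exploiting the fact that the system \eqref{eq:rec1} is triangular: the equation for $\psi_x^-$ involves only left-moving amplitudes, so one can solve the $\psi^-$ component first and then feed it as a known source into the equation for $\psi_x^+$. For the base case I would read off from the localized initial condition \eqref{eq:IC} that $\psi_0^-(0)=1$ while every other amplitude, including all $\psi_x^+(0)$, vanishes; this matches both formulas at $t=0$, where the products are empty and equal to $1$ and all the vanishing clauses for $\psi^+$ apply.

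For the left-moving amplitude the induction is immediate. The recurrence $\psi_x^-(t+1)=A_{-x}\psi_{x+1}^-(t)$ is a pure shift-with-gain, so the single nonzero value located at $x=-t$ is carried to $x=-(t+1)$ and multiplied by $A_{-(-(t+1))}=A_{t+1}$, turning $\prod_{i=1}^t A_i$ into $\prod_{i=1}^{t+1}A_i$, while all other sites stay zero. This settles the first formula for all $t$.

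The substantive part is the right-moving amplitude, which I would analyze through $\psi_x^+(t+1)=\psi_{x-1}^+(t)+B_{-x+2}\psi_{x-1}^-(t)$ with $\psi^-$ now treated as a known source. The three facts I expect to drive the argument are: (i) the source term $B_{-x+2}\psi_{x-1}^-(t)$ is active only at the single site $x=1-t$, where $x-1=-t$ coincides with the left-mover, and there it injects exactly $B_{t+1}\prod_{i=1}^t A_i$, which equals the claimed value since $\tfrac{(t+1)-(1-t)}{2}=t$; (ii) at every other site the source vanishes and the pure transport $\psi_{x-1}^+(t)$ reproduces the formula because the index is invariant under the relevant shift, $\tfrac{(t+1)-x}{2}=\tfrac{t-(x-1)}{2}$; and (iii) the vanishing clauses propagate correctly, since $(t+1)-x$ odd forces $t-(x-1)$ odd, and $|x|>t+1$ or $x=-(t+1)$ forces the shifted argument out of the support at time $t$.

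The main obstacle I anticipate is purely the boundary bookkeeping. I must verify that at the birth site $x=1-t$ the transport contribution $\psi_{-t}^+(t)$ genuinely vanishes, by the $x=-t$ clause of the inductive hypothesis, so that the source is counted once and not added on top of a nonzero transported value; and I must check that at the neighbouring interior site $x=3-t$ the shifted argument $x-1=2-t$ still satisfies $|x-1|\le t$ for $t\ge 1$, so that the transport rule applies there. The last gap to close is to confirm that no admissible interior site lies strictly between the excluded position $x=-(t+1)$ and the birth site $x=1-t$; since these are parity-adjacent and differ by $2$, there is none, so the case split is exhaustive and the induction closes.
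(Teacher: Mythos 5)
Your proposal is correct and follows essentially the same route as the paper's own proof: induction on $t$, solving the left-moving amplitude first as a pure shift-with-gain, then treating $\psi^-$ as a one-site source in the $\psi^+$ recurrence, with your cases (i)--(iii) matching the paper's case split ($x=-(t+1)$, $|x|>t+1$, parity, and the birth site $x-1=-t$ versus transport $x-1\neq -t$) exactly. The boundary checks you flag (vanishing transport at the birth site, exhaustiveness of the parity-adjacent cases) are precisely the subcases the paper verifies, so nothing further is needed.
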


\begin{proof}
We can prove the above theorem by induction. Indeed, let us first compute $\psi^-_x(t+1)$, we may discern two cases: 
\begin{itemize}
\item $x \neq -(t+1)$ : then $x+1 \neq -t$ and 
\[\psi_{x+1}^-(t) = 0 \Rightarrow \psi^-_x(t+1) = 0\]
\item $x = -(t+1)$ : then $x+1 = -t$ and
\[\psi^- _x(t+1) = A_{-x}\psi^-_{x+1}(t) = A_{t+1}\prod_{i=1}^t{A_i} = \prod_{i=1}^{t+1}{A_i}.\]
\end{itemize}

In order to compute $\psi_x^{+}(t+1)$, we discern four cases : 
\begin{itemize}
\item $|x|>t+1$ : the celerity of the walk being of one space-step per time-step, and the initial condition being localized, then $\psi^+_x(t+1) = 0$
\item $x = -(t+1)$ : using the same argument, $\psi^{\pm}_{x-1}(t+1)=0$ and $\psi^+_x(t+1) = 0$
\item $x-(t+1)$ is odd : then $(x-1)-t$ is odd and $x-1 \neq -t$,
\[\psi^+_{x-1}(t) = 0 \text{ and } \psi^-_{x-1}(t) = 0 \Rightarrow \psi^+_x(t+1) = 0\]
\item else (in particular, $x-t-1$ is even) : there are two subcases :
\begin{itemize}
\item[$\blacktriangleright$] $x-1 = -t$ : then $\psi^+_{x-1}(t) = 0$ and
\[\psi^+_x(t+1) = B_{-x+2}\prod_{i=1}^t{A_i} = B_{\frac{t+1-x}{2}+1}\prod_{j=1}^{\frac{t+1-x}{2}}{A_j}\]
\item[$\blacktriangleright$] $x-1 \neq -t$ : then $\psi^-_{x-1}(t) = 0$ and 
\[\psi^+_x(t+1) = B_{\frac{t+1-x}{2}+1}\prod_{j=1}^{\frac{t+1-x}{2}}{A_j}\]
\end{itemize}
\end{itemize}
\end{proof}

Using Theorem \ref{th:amplitudes}, it is straightforward to compute the probability density of the walker, which reads:

\[\P_x(t)=\left\{\begin{matrix}
\prod_{i=1}^t{A_i^2} & \text{if } x=-t \\
0 & \text{if } x-t \text{ is odd} \\
B_{\frac{t-x}{2}+1}^2\prod_{j=1}^{\frac{t-x}{2}}{A_j^2} & \text{else} & \\
\end{matrix}\right.\]
where, the above probability density is vanishing for $|x|>t$.\\
Once we know the analytical expression of the probability density, one can compute the mean trajectory and variance. The first one reads :

\[\E(t) = -t\prod_{i=1}^t{A_i^2}+ \sum_{k=0}^{t-1}{(t-2k)B_{k+1}^2\prod_{j=1}^{k}{A_j^2}},\]

and the variance:

\begin{small} 
 \begin{multline}
\text{Var}(t)=t^{2} \prod_{i=1}^{t} A_i^{2} -\\ \nonumber 
 \left(t \prod_{i=1}^{t} A_i^{2} - \sum_{i=0}^{t - 1} - \left(2 i - t\right) B_{i+1}^{2} \prod_{j=1}^{i} A_j^{2}\right)^{2} + \\ \nonumber
\sum_{i=0}^{t - 1} \left(2 i - t\right)^{2} B_{i+1}^{2}\prod_{j=1}^{i} A_j^{2}.
\end{multline}
\end{small}

Notice that the probability density and both the first two momenta, depend solely on the parameters $\{A_x,B_x\}$ which we fix at the initial state.
Let us now explore few exemples to show how we can recover any arbitrary mean trajectory and variance by controlling the sole initial condition of the system. Suppose we need to recover a \textit{linear} mean trajectory. The way to do that is setting $A_k=1$ and $B_k=0$ $\forall k > 1$, supposed that $A_1$ and $B_1$ are known. Without lack of generality we set $A_1^2 = (1-B_1^2)$.\\
Then the probability density is:
\[\P_x(t)=\left\{\begin{matrix}
(1-B_1^2)^2 & \text{if } x=-t \\
B_1^2 & \text{if } x=t \\
0 & \text{else} \\
\end{matrix}\right.\]
In this particular case, the general expression of the mean value of the trajectory reduces to:
\begin{equation}
\E(t) = - t (1-2B^{2}_1)
\end{equation}
The pre-factor $2B_1^2-1 \equiv v$, may be seen as the mean velocity of the walker. As we can see in Fig. \ref{fig:VElinear} the smaller is $v$, the smaller will be the velocity of the mean trajectory. The variance $\text{Var}(t)$  will be of course $\propto t^2$, which coincide with the standard ballistic behaviour of an homogeneous QW. 

\begin{figure}
\includegraphics[width=8cm]{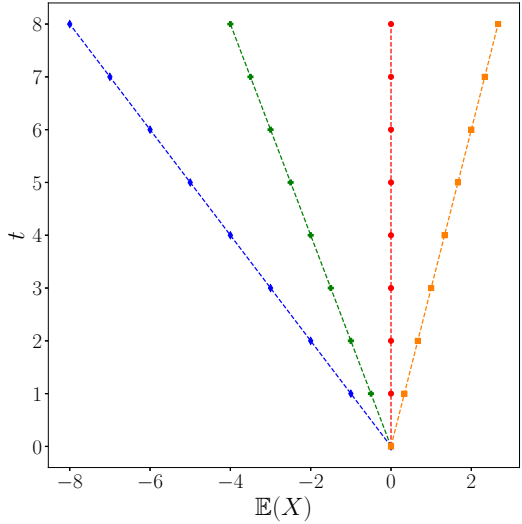}
\caption{Mean trajectory of the walker for different values of $v$. 
Points represents the theoretical prediction, the dashed line coincides with the numerical simulation.}
\label{fig:VElinear}
\end{figure}

A second less trivial example may be represented by the \textit{parabolic} mean trajectory, which translates in a non-linear variance. For example, let's set $\displaystyle B_k = \sqrt{- \frac{z}{z \left(k - 1\right) - 2}}$, $z \in \mathbb{R}^+$. \\ Now the probability reads :
\[\P_x(t)=\left\{\begin{matrix}
1-\frac{tz}{z+2} & \text{if } x=-t \\
0 & \text{if } x-t \text{ is odd} \\
\frac{z}{z+2} & \text{else} \\
\end{matrix}\right.\]
\begin{figure}[h!]
\includegraphics[width=8cm]{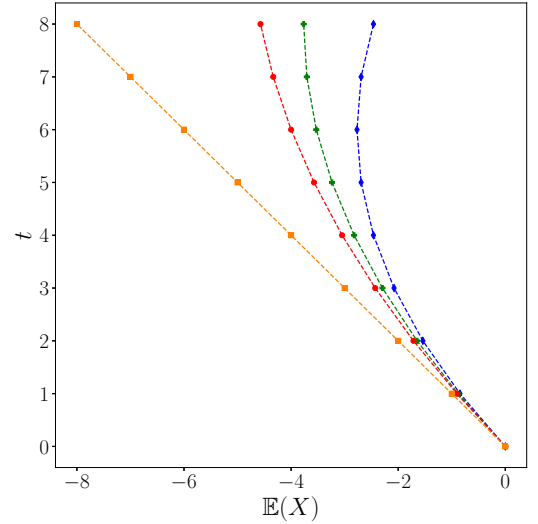}\\
\includegraphics[width=8cm]{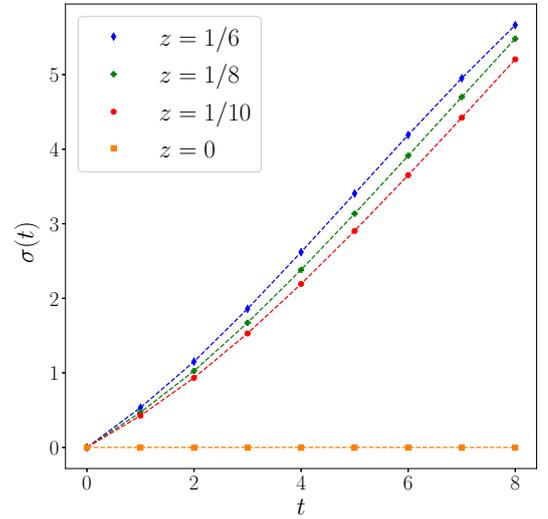}
\caption{Mean trajectory of the walker for different values of $z$. (Bottom) Variance of the walker for different values of $z$. 
Points represents the theoretical prediction, the dashed line coincides with the numerical simulation.}
\label{fig:varNL}
\end{figure}
and we can consequently deduce the mean trajectory
\[\E(t) = \frac{t(zt-2)}{z+2}\]
and the standard deviation $\sigma =\sqrt{\text{Var}(t)} $ :
\begin{small}
\[\sigma(t) = \sqrt{\frac{t \left(- 3 t \left(t z - 2\right)^{2} + \left(z + 2\right) \left(- 2 t^{2} z + 3 t z + 6 t + 2 z\right)\right)}{3 \left(z + 2\right)^{2}}}.\]
\end{small}
Notice that the above standard deviation is not linear in time, as shown in Fig. \ref{fig:varNL} and the non-linear behaviour of $\sigma$ strongly depends on $z$.

In all previous particular cases, we have shown how to recover linear and non linear moments, keeping the probability density $\P_x(t)$ either constant either linear in $t$. In our last result, we show how it is possible to generalize these results, making the walker's probability density follow arbitrary trajectories. This result is surprising as, although similar results have been obtained before, they required to define a local metrics for each point of the space-time lattice. More specifically, we want to show how, by paying the price of introducing quantum memory, this result can be achieved simply by acting on the initial state and therefore can be exactly controlled once for all. This translates in the following theorem: 

\begin{theorem}
Let us choose \[
  B_k^2 = \frac{f_{k-1}}{1-\sum_{i=1}^{k-2}{f_i}}
 \]
 then the probability density reads:
\[\P_x(t)=\left\{\begin{matrix}
1-\sum_{i=0}^{t-1}{f_i} & \text{if } x=-t\\
0 & \text{if } x-t \text{ is odd} \text{ or } |x|>t\\
f_{\frac{t-x}{2}} & \text{else}\\
\end{matrix}\right.\]
for some arbitrary function $f_i$ which verifies
\begin{itemize}
 \item $\forall t, \; \sum_{i=0}^{t-1}{f_i} \leq 1$
 \item $\forall t, \; 0 \leq f_t \leq 1.$
\end{itemize}

 \end{theorem}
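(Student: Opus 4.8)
The plan is to reduce the claim to the probability density already derived from Theorem~\ref{th:amplitudes} and then evaluate that density on the prescribed $B_k$. The only input beyond Theorem~\ref{th:amplitudes} is the normalisation of each memory qubit in the initial state \eqref{eq:IC}, i.e.\ $A_k^2+B_k^2=1$, which lets me replace every factor $A_j^2$ by $1-B_j^2$. Substituting the prescribed value and writing $S_n=\sum_{i=0}^{n}f_i$ with $S_{-1}=0$, a one-line simplification gives
\[
A_j^2 = 1-\frac{f_{j-1}}{1-S_{j-2}} = \frac{1-S_{j-1}}{1-S_{j-2}},
\]
so each factor is a ratio of consecutive partial remainders.

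The key step is the telescoping of the cumulative product in Theorem~\ref{th:amplitudes}: for any $m\ge 0$,
\[
\prod_{j=1}^{m}A_j^2 = \prod_{j=1}^{m}\frac{1-S_{j-1}}{1-S_{j-2}} = \frac{1-S_{m-1}}{1-S_{-1}} = 1-\sum_{i=0}^{m-1}f_i .
\]
The two nontrivial branches then follow by direct substitution. For $x=-t$, $\P_{-t}(t)=\prod_{i=1}^{t}A_i^2=1-\sum_{i=0}^{t-1}f_i$. For the remaining even sites, set $m=\frac{t-x}{2}$, so that Theorem~\ref{th:amplitudes} gives $\P_x(t)=B_{m+1}^2\prod_{j=1}^{m}A_j^2$; inserting $B_{m+1}^2=f_m/(1-S_{m-1})$ together with the telescoped product cancels the factor $1-S_{m-1}$ and leaves exactly $f_m=f_{\frac{t-x}{2}}$. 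The vanishing branches ($x-t$ odd or $|x|>t$) are inherited verbatim from Theorem~\ref{th:amplitudes}.

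Finally I would verify that the two stated hypotheses on $f$ are exactly what legitimises the construction: $0\le f_t\le 1$ and $\sum_{i=0}^{t-1}f_i\le 1$ together force $0\le B_k^2\le 1$, so that $A_k\ket{0}+B_k\ket{1}$ is a genuine qubit, keep each denominator $1-S_{k-2}$ positive, and make every entry of the resulting density nonnegative; a final one-line check confirms the branches sum to $1-\sum_{i=0}^{t-1}f_i+\sum_{m=0}^{t-1}f_m=1$. I do not anticipate a real obstacle, since the whole argument is a substitution followed by a telescoping product; the only delicate point is the bookkeeping of the index shift between the $B_k$ (labelled by $k$) and the $f_i$ (labelled from $0$), and this is precisely what the cancellation of $1-S_{m-1}$ takes care of.
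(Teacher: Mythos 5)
Your proposal is correct and takes essentially the same route as the paper: both substitute the prescribed $B_k$ into the density obtained from Theorem~\ref{th:amplitudes} and reduce everything to the identity $\prod_{j=1}^{m}A_j^2 = 1-\sum_{i=0}^{m-1}f_i$, which the paper proves by induction and you prove by the equivalent telescoping product, followed by the same cancellation of $1-\sum_{i=0}^{m-1}f_i$ against $B_{m+1}^2$. One remark worth keeping: your denominator $1-\sum_{i=0}^{k-2}f_i$ (sum starting at $i=0$) is exactly what the paper's own proof uses and is the version that makes the theorem true --- the $\sum_{i=1}^{k-2}$ appearing in the theorem statement is a typo, since with it one gets $B_2^2=f_1$ and the even-site branch would evaluate to $f_1\left(1-f_0\right)\neq f_{\frac{t-x}{2}}$.
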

\begin{proof}
 We take  \[
 \left\{
 \begin{matrix}
  B_k^2 = \frac{f_{k-1}}{1-\sum_{i=0}^{k-2}{f_i}}\\ \nonumber
    \\
  A_k^2 = 1-\frac{f_{k-1}}{1-\sum_{i=0}^{k-2}{f_i}}\;\\ \nonumber
 \end{matrix}
\right. . \]
 Now, we want to prove that 
 \[
  \left\{
  \begin{matrix}
   \prod_{i=1}^t{A_i^2} = 1-\sum_{i=0}^{t-1}{f_i}\\
    \\
   B_{\frac{t-x}{2}+1}^2\prod_{j=1}^{\frac{t-x}{2}}{A_j^2} = f_{\frac{t-x}{2}}\\
  \end{matrix}
    \right. .
 \]
First let us prove by induction that \[\prod_{i=1}^t{A_i^2} = 1-\sum_{i=0}^{t-1}{f_i} .\]
\begin{itemize}
 \item For $t=0$ : \[\prod_{i=1}^0{A_i^2} = 1-\sum_{i=0}^{-1}{f_i}  = 1 .\]
 \item If it's true for $t$, then
 \[
  \begin{split}
   \prod_{i=1}^{t+1}{A_i^2} &= A_{t+1}^2\prod_{i=1}^{t}{A_i^2}\\
   &= \left(1-\frac{f_t}{1-\sum_{i=0}^{t-1}{f_i}} \right)\prod_{i=1}^{t}{A_i^2}\\
   &= \left(1-\frac{f_t}{1-\sum_{i=0}^{t-1}{f_i}} \right)\left(1-\sum_{i=0}^{t-1}{f_i}\right)\\
   &= 1-\sum_{i=0}^{t-1}{f_i} - f_t\\
   & = 1 - \sum_{i=0}^{t}{f_i}
  \end{split}.
 \]
 So it is true for $t+1$.\\
 \end{itemize}
 Now let us prove that \[B_{\frac{t-x}{2}+1}^2\prod_{j=1}^{\frac{t-x}{2}}{A_j^2} = f_{\frac{t-x}{2}} \]
 \[
  \begin{split}
   B_{\frac{t-x}{2}+1}^2\prod_{j=1}^{\frac{t-x}{2}}{A_j^2} &= B_{\frac{t-x}{2}+1}^2 \left(1-\sum_{i=0}^{\frac{t-x}{2}-1}{f_i}\right)\\
   &= \left( \frac{f_{\frac{t-x}{2}}}{1-\sum_{i=0}^{\frac{t-x}{2}-1}{f_i}}\right)\left(1-\sum_{i=0}^{\frac{t-x}{2}-1}{f_i}\right)\\
   &= f_{\frac{t-x}{2}}\\
  \end{split}
 \]
\end{proof}

\section{Discussion}\label{sec:discussion}
In conclusion, we introduced a quantum walker which interacts with a memory at each site, allowing the walker's dynamics to depend on the state of the memory qubits in the particle's neighborhood. We considered a scheme that parametrizes the initial states of the memory qubits, and we analytically obtain the probability density of the walker's position, and consequently, its mean trajectory and variance. Varying these parameters alone suffices to generate a range of trajectories that may simulate motion on curved manifolds. This pave the way to implement dynamically interesting physics models, especially quantum particle propagation on curved spacetime. Indeed, embedding the mean trajectory, once for all, in the initial state of the overall QCA, is a clear computational advantage, which may reduce the resources needed for the simulation of a wide variety of dynamical physical models. Possible extensions of these results may also concern quantum algorithms. For example, such a model could inspire efficient spatial search algorithms, as an extension of single quantum walk based schemes. We also leave to future research the extension of the model in higher dimensional space than one. 

\section{Acknowledgements} The authors acknowledge inspiring conversations with Asif Shakeel, David Meyer, Pablo Arrighi and Nathanael Eon. This work has been funded by the Pépinière d’Excellence 2018, AMIDEX fondation, project DiTiQuS and the ID 60609 grant from the John Templeton Foundation, as part of the “The Quantum Information Structure of Spacetime (QISS)” Project.

\appendix

\section{Appendix\label{ap:ap1}}

\subsection{Recurrence relations}

Let's recall the initial state \eqref{eq:IC} : 
\begin{small}
\begin{equation*}
\Psi(0) = \ket{0}\ket{v^-} \left(\bigotimes_{x=-\left \lfloor{\frac{N}{2}}\right \rfloor}^{-1}{A_{-x}\ket{0}+B_{-x}\ket{1}}\right)\left(\bigotimes_{x=0}^{\left \lfloor{\frac{N}{2}}\right \rfloor}{\ket{0}}\right)
\end{equation*}
\end{small}

and let's rephrase it in the following array : \\

$
\begin{array}{||c||c|c|c|c|c|c|c||}
    \hline
    \text{position} & .. & -2 & -1 & 0 & 1 & 2 & ..  \\
    \hline
    \text{velocity} & \qub{..}{..} & \qub{0}{0} & \qub{0}{0} & \qub{1}{0} & \qub{0}{0} & \qub{0}{0} & \qub{..}{..} \\
    \hline
    \text{memory} & \qub{..}{..} & \qub{A_2}{B_2} & \qub{A_1}{B_1} & \qub{1}{0} & \qub{1}{0} & \qub{1}{0} & \qub{..}{..} \\
    \hline
\end{array}
$

\vspace{0.5 cm}

\paragraph{First action of Q} : 
$Q$ can only act on $x=0$, on which there is $\ket{v^-}$ associated to its left and right neighbors which are in memory space $A_1 \ket{0} + B_1 \ket{1}$ (left neighbor in position $x=-1$) and $1 \times \ket{0}$ (right neighbor in position $x=+1$). \\
So that $Q$ will act on $A_1\ket{v^-00}+ B_1\ket{v^-10}$ and
$$ Q \left( A_1\ket{v^-00}+ B_1\ket{v^-10} \right) = A_1\ket{v^-00} + B_1 \ket{v^+01} $$

\paragraph{First shift} : 
The shift acts in the position velocity space on $A_1 \ket{0} \ket{v^-} + B_1 \ket{0} \ket{v^+}$, and its action is  
$$S \left( A_1 \ket{0} \ket{v^-} + B_1 \ket{0} \ket{v^+} \right) = A_1 \ket{-1} \ket{v^-} + B_1 \ket{+1}\ket{v^+} $$

\paragraph{State after first iteration} : \\

$
\begin{array}{||c||c|c|c|c|c|c|c||}
    \hline
    \text{position} & .. & -2 & -1 & 0 & 1 & 2 & ..  \\
    \hline
    \text{velocity} & \qub{..}{..} & \qub{0}{0} & \qub{A_1}{0} & \qub{0}{0} & \qub{0}{B_1} & \qub{0}{0} & \qub{..}{..} \\
    \hline
    \text{memory} & \qub{..}{..} & \qub{A_2}{B_2} & \qub{1}{0} & \qub{1}{0} & \qub{1}{1} & \qub{1}{0} & \qub{..}{..} \\
    \hline
\end{array}$
\vspace{0.5 cm}

In position velocity space : \\
$$\psi(1) = A_1\ket{-1}\ket{v^-}+B_1\ket{+1}\ket{v^+}$$
So that 
$$\psi^-_{-1}(1) = A_1\times 1 = A_{-x}\psi^-_{x+1}(0)$$ and 
$$\psi^+_{1} = B_1 \times 1 = B_{-x+2}\psi^-_{x-1}(0)$$

From there, the recurrence relations for the amplitudes $\psi^{\pm}_x(t)$ reads : 
\begin{align*}
\psi_x^-(t+1) &= A_{-x}\psi_{x+1}^-(t)\\ 
\psi_x^+(t+1) &= B_{-x+2}\psi_{x-1}^-(t)
\label{eq:rec1}
\end{align*}

\paragraph{Second step} : \\
Now $Q$ acts on $x=-1$ and $x=1$ where, \\
for $A_1\ket{x=-1}\ket{v^-}$, we have for the left neighbor $A_2 \ket{0}+B_2\ket{1}$ and for the right neighbor $1 \times \ket{0}$, so that $Q$ will act there on $A_2 A_1 \ket{v^-00}+ B_2 A_1 \ket{v^-10}$, resulting in
$$Q \left( A_2A_1\ket{v^-00} + B_2A_1 \ket{v^-10} \right) = A_2A_1\ket{v^-00} + B_2A_1 \ket{v^+01}$$ 

For $B_1\ket{x=+1}\ket{v^+}$, we have for the left neighbor $1 \times \ket{0}$ and for the right neighbor $1 \times \ket{0}$, so that $Q$ will act on $B_1 \ket{v^+00}$, and 
$$Q B_1 \ket{v^+00} = B_1 \ket{v^+11}$$

Now the shift will act, in the position velocity space, as
\begin{multline*}
 S \left( A_2A_1\ket{-1}\ket{v^-} + B_2A_1\ket{-1}\ket{v^+} + B_1\ket{+1}\ket{v^+} \right) \\
 = A_2A_1\ket{-2}\ket{v^-}+B_2A_1\ket{0}\ket{v^+}+B_1\ket{+2}\ket{v^+}
\end{multline*}



For the amplitudes $\psi^{\pm}$, this gives : 
$$\psi_{-2}^-(2) = A_2\times A_1 = A_{-x}\psi^-_{x+1}(1)$$
 
$$\psi_0^+(2) = B_2 \times A_1 = B_{-x+2}\psi^-_{x-1}(1)$$ 

$$\psi_2^+(2) = B_1 = \psi^+_{x-1}(1)$$

So that now the full recurrence relations are 

\begin{align*}
\psi_x^-(t+1) &= A_{-x}\psi_{x+1}^-(t) \hspace{2cm}\\ \nonumber
\psi_x^+(t+1) &= \psi_{x-1}^+(t)  + B_{-x+2}\psi_{x-1}^-(t)
\end{align*}

\bibliographystyle{plain}	
\bibliography{biblio.bib}

\end{document}